\UseRawInputEncoding
\documentclass[11pt, reqno]{amsart}

\usepackage{amsfonts,latexsym, enumerate}
\usepackage{amsmath}
\usepackage{amscd}
\usepackage{float,amsmath,amssymb,mathrsfs,bm,multirow,graphics}
\usepackage[dvips]{graphicx}
\usepackage[percent]{overpic}
\usepackage{amsaddr}
\usepackage{todonotes}
\usepackage[numbers,sort&compress]{natbib}
\usepackage{grffile}

\addtolength{\topmargin}{-10ex}
\addtolength{\oddsidemargin}{-3.15em}
\addtolength{\evensidemargin}{-3.15em}
\addtolength{\textheight}{15ex}
\addtolength{\textwidth}{6.3em}

\newcommand{\R}{{\mathbb R}}

\newcommand{\C}{{\mathbb C}}
\newcommand{\Z}{{\mathbb Z}}


\DeclareMathOperator{\arctanh}{arctanh}

\newcommand{\Li}{\text{\upshape Li}}



\newtheorem{theorem}{Theorem}[section]
\newtheorem{lemma}[theorem]{Lemma}
\newtheorem{proposition}[theorem]{Proposition}

\theoremstyle{definition}

\theoremstyle{remark}

\numberwithin{equation}{section}

\usepackage[colorlinks=true]{hyperref}
\hypersetup{urlcolor=blue, citecolor=red, linkcolor=blue}


\input epsf
\title[Antiferromagnetic gap for half-filled 2D Hubbard model]{On the mean-field antiferromagnetic gap for the half-filled 2D Hubbard model at zero temperature}

\author{E. Langmann$^{1}$ and J. Lenells$^{2}$}
\address{$^1$Department of Physics, KTH Royal Institute of Technology, \\ 106 91 Stockholm, Sweden
	\\
$^{2}$Department of Mathematics, KTH Royal Institute of Technology, \\ 100 44 Stockholm, Sweden}
\email{langmann@kth.se}
\email{jlenells@kth.se}
    
\begin{document}

\begin{abstract}
We consider the antiferromagnetic gap for the half-filled two-dimensional (2D) Hubbard model (on a square lattice) at zero temperature in Hartree--Fock theory. It was conjectured by Hirsch in 1985 that this gap, $\Delta$, vanishes like $\exp(-2\pi\sqrt{t/U})$ in the weak-coupling limit $U/t\downarrow 0$ ($U>0$ and $t>0$ are the usual Hubbard model parameters). We give a proof of this conjecture based on recent mathematical results about Hartree-Fock theory for the 2D Hubbard model. The key step is the exact computation of an integral involving the density of states of the 2D tight binding band relation.
\end{abstract}

\maketitle

\noindent
{\small{\sc AMS Subject Classification (2020)}: 81T25, 82D03, 45M05, 81V74.}

\noindent
{\small{\sc Keywords}: Antiferromagnetic gap, Hubbard model, Hartree--Fock theory, universality.}
 
\section{Introduction}
The 2D Hubbard model is a basic model for magnetism and a central model in present theories of high-temperature superconductors, see e.g. \cite{ABKR2022, C2023, QSACG2022} for recent reviews. The 1D Hubbard model can been solved exactly by means of the Bethe Ansatz \cite{LW1968}, but in two and higher dimensions the model turns out to be remarkably rich and displays a plethora of interesting phenomena. For example, it is generally believed that the half-filled Hubbard model describes a Mott insulator at half-filling \cite{M1968}, as strongly suggested by Hartree-Fock theory; see \cite{BLS1994,BP1996,LL2024} and references therein.

In this note, we consider the half-filled 2D Hubbard model at zero temperature in the thermodynamic limit. In the repulsive case where the Hubbard coupling parameter,  $U$, is positive, the ground state of this model exhibits antiferromagnetic order, meaning that the expected spin density $\vec{m}(x_1, x_2)$ at the lattice site $(x_1, x_2) \in \Z^2$ has the form
\begin{align}\label{vecmdef}
\vec{m}(x_1, x_2) = m (-1)^{x_1 + x_2}  \vec{e},
\end{align}
where $m \in (0,1]$ is a constant and $\vec{e} \in \R^3$ is a fixed unit vector. 
In the Hartree--Fock approximation, the antiferromagnetic energy gap $\Delta = \frac{U m }{2}$ is determined by the well-known equation (see \cite{H1985} and references therein; a derivation can be found in \cite{LL2024}, e.g.)
\begin{align}\label{DeltaAFeq}
\frac{1}{U} =  \frac{1}{(2\pi)^2}\int_{-\pi}^\pi \int_{-\pi}^\pi 
\frac{1}{2\sqrt{\Delta^2 + \varepsilon(k_1, k_2)^2}}  dk_1 dk_2,
\end{align}
where $U > 0$ is the on-site repulsion and $\varepsilon(k_1, k_2)$ is the dispersion relation, which can be expressed in terms of the hopping parameter $t > 0$ as
\begin{align}\label{epsilondef}
\varepsilon(k_1, k_2) := - 2 t(\cos{k_1} + \cos{k_2}).
\end{align}
If $\Delta(U,t)$ is the antiferromagnetic energy gap associated with the parameters $(U,t)$, then $\Delta$  possesses the scaling property $\Delta(U,t) = t\Delta(U/t,1)$, implying that we may without of generality assume that $t = 1$ in what follows. We may then view $\Delta = \Delta(U)$ as a function of the coupling parameter $U>0$. 
It follows from (\ref{DeltaAFeq}) and relatively straightforward methods (see Section \ref{proofsec1}) that, up to exponentially small corrections,
\begin{align}\label{DeltaUb1}
\Delta(U) \approx 32 e^{-\sqrt{\frac{4 \pi^2}{U} + b_1}} \qquad \text{as $U \downarrow 0$},
\end{align}
where $b_1$ is a real constant given by
\begin{align}\label{b1def}
b_1 := 4 \pi^2 \left(\frac{(\ln{2})^2}{\pi^2} -\frac{1}{24} -a_0\right).
\end{align}
Here $a_0$ is a real constant defined in terms of the density of states 
\begin{equation}\label{N0}  
N_0(\epsilon) :=  \frac{1}{(2\pi)^2} \int_{-\pi}^\pi \int_{-\pi}^\pi \delta\big(2 (\cos{k_1} + \cos{k_2}) + \epsilon\big) dk_1dk_2
\end{equation} 
by
\begin{align}\label{a0def}
a_0 := \int_{0}^{4} \bigg(N_0(\epsilon) - \frac{\ln(\frac{16}{\epsilon})}{2 \pi^2}\bigg)
\frac{1}{\epsilon}  d\epsilon
\end{align}
($\delta$ in \eqref{N0} is the Dirac delta function).
To be mathematically precise, what we mean by (\ref{N0}) is that $N_0(\epsilon)d\epsilon$ is the pushforward to $\R$ by the function $(k_1,k_2) \mapsto -2 (\cos{k_1} + \cos{k_2})$ of the Lebesgue measure on $[-\pi,\pi]^2$ divided by $(2\pi)^2$.
This means that
\begin{align}\label{gN0}
\int_\R g(\epsilon) N_0(\epsilon)d\epsilon = \frac{1}{(2\pi)^2} \int_{-\pi}^\pi \int_{-\pi}^\pi g(-2 (\cos{k_1} + \cos{k_2})) dk_1dk_2
\end{align}
for any measurable function $g:\R \to \R$ such that $gN_0 \in L^1(\R)$.
We note that $N_0(\epsilon)$ is an interesting transcendental function \cite{DJ2001}, and we observe that the integral  defining $a_0$ converges near $\epsilon = 0$, because
\begin{align}\label{N0asymptotics}
N_0(\epsilon) 
= \frac{\ln(\frac{16}{\epsilon})}{2 \pi^2} + \frac{\epsilon^2 (\ln(\frac{16}{\epsilon})-1)}{128 \pi^2} + O\Big(\epsilon^{4} \ln{\tfrac{1}{\epsilon}}\Big) \qquad \text{as $\epsilon\downarrow 0$};
\end{align}
see \cite[Theorem 2.1]{LL2024}; the constant $a_0$ is a measure of what remains of the density of states after the logarithmic singularity at $\epsilon = 0$ has been subtracted.

In 1985, Hirsch presented a result without derivation suggesting that the constant $b_1$ defined in \eqref{b1def} equals zero, see \cite[Eq. (2.15)]{H1985}.
Furthermore, numerical computations indicate that $b_1$ vanishes to high accuracy. 
A negative value of $b_1$ would be highly surprising because it would indicate a breakdown of the asymptotic formula at $U = 4\pi^2/|b_1|$. 
This raises the question: is $b_1$ exactly zero or only approximately zero? 
And if it is exactly zero, where does this fine-tuning of the parameter $b_1$ come from?

In this note, we show that $b_1$ indeed is exactly zero. More precisely, we prove the following asymptotic formula for the antiferromagnetic energy gap which is our main result.

\begin{theorem}\label{mainth}
The antiferromagnetic energy gap $\Delta(U)$ of the 2D Hubbard model defined as the unique solution of (\ref{DeltaAFeq}) with $t = 1$ satisfies
\begin{align*}
	\Delta(U) 
	= 32 e^{-\frac{2\pi}{\sqrt{U}}}\Bigg(1 + O\bigg(\frac{e^{-\frac{4\pi}{\sqrt{U}}}}{\sqrt{U}}\bigg)\Bigg)
	\qquad \text{as $U\downarrow 0$.}
\end{align*}
\end{theorem}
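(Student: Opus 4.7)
The plan is to pass from the two-dimensional gap equation (\ref{DeltaAFeq}) to a one-dimensional integral against the density of states, extract the small-$\Delta$ asymptotics, and then carry out the exact evaluation of the constant $a_0$ that is the true content of the theorem. Using (\ref{gN0}) with $g(\epsilon) = 1/(2\sqrt{\Delta^2 + \epsilon^2})$ and the evenness in $\epsilon$, the gap equation becomes
\begin{equation*}
\frac{1}{U} = \int_0^4 \frac{N_0(\epsilon)}{\sqrt{\Delta^2 + \epsilon^2}}\, d\epsilon.
\end{equation*}
I would split $N_0(\epsilon) = \frac{1}{2\pi^2}\ln(16/\epsilon) + r(\epsilon)$, where by (\ref{N0asymptotics}) the remainder satisfies $r(\epsilon) = O(\epsilon^2 \ln(1/\epsilon))$ near $\epsilon = 0$ and is smooth on the rest of $[0,4]$.

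The leading piece $\frac{1}{2\pi^2}\int_0^4 \ln(16/\epsilon)/\sqrt{\Delta^2+\epsilon^2}\,d\epsilon$ can be evaluated by the substitution $\epsilon = \Delta\sinh u$, using the elementary integrals $\int_0^\infty dv/(1+v^2)^{3/2} = 1$ and $\int_0^\infty \ln v\,dv/(1+v^2)^{3/2} = -\ln 2$ (which accounts for the $32 = 16 \cdot 2$ in the final answer). This produces an asymptotic expansion of the form $\frac{1}{4\pi^2}\ln^2(32/\Delta) + C + O(\Delta^2)$ for an explicit constant $C$ that encodes the $(\ln 2)^2/\pi^2$ and $1/24$ terms appearing in (\ref{b1def}). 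The remainder piece $\int_0^4 r(\epsilon)/\sqrt{\Delta^2+\epsilon^2}\, d\epsilon$ converges, by the bound on $r$, to $\int_0^4 r(\epsilon)/\epsilon\, d\epsilon = a_0$ as $\Delta\downarrow 0$, with an error $O(\Delta^2 \ln(1/\Delta))$. Solving the resulting transcendental equation for $\Delta$ yields (\ref{DeltaUb1}) together with the error estimate $O(e^{-4\pi/\sqrt U}/\sqrt U)$ claimed in the theorem; this part of the argument is essentially routine asymptotic analysis.

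The hard part, and the main obstacle, is the closed-form evaluation $a_0 = (\ln 2)^2/\pi^2 - 1/24$, which is what forces $b_1 = 0$. Using the classical representation $N_0(\epsilon) = \frac{1}{2\pi^2} K(\sqrt{1 - \epsilon^2/16})$ in terms of the complete elliptic integral of the first kind, the substitution $\epsilon = 4k'$ with $k' = \sqrt{1-k^2}$ turns $a_0$ into an integral of $K(k)$ against a logarithmic weight on $(0,1)$. The appearance of the rational $1/24$ is a strong hint that the evaluation proceeds through the modular interpretation of $K$ via the nome $q = e^{-\pi K(k')/K(k)}$ and the Dedekind eta identity $\eta(\tau) = q^{1/24}\prod_{n\geq 1}(1-q^n)$, with the self-dual point $\tau = i$ (corresponding to $\epsilon = 0$, where $N_0$ has its logarithmic singularity) playing the central role. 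Concretely, I would try to express $a_0$ as an explicit combination of Lambert series $\sum_{n\geq 1} nq^n/(1-q^n)$ and $\sum_{n\geq 1} q^n/(n(1-q^n))$ evaluated at $q = e^{-\pi}$, for which standard modular evaluations are available. This modular origin is what produces the ``fine-tuning'' that Hirsch observed.
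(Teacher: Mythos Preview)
Your reduction of the gap equation to a one-dimensional integral against $N_0$ and the splitting $N_0(\epsilon)=\tfrac{1}{2\pi^2}\ln(16/\epsilon)+r(\epsilon)$ is exactly what the paper does. The asymptotic analysis of the leading piece is also in the same spirit: the paper finds an explicit antiderivative in terms of the dilogarithm rather than using your $\epsilon=\Delta\sinh u$ substitution, but both routes are routine and lead to $\tfrac{1}{U}=\tfrac{1}{4\pi^2}(\ln(32/\Delta))^2+\big(\tfrac{1}{24}-\tfrac{(\ln 2)^2}{\pi^2}+a_0\big)+O(\Delta^2(\ln\Delta)^2)$. (Two small imprecisions: the integrals $\int_0^\infty(1+v^2)^{-3/2}dv$ and $\int_0^\infty\ln v\,(1+v^2)^{-3/2}dv$ you quote do not arise directly from the sinh substitution you describe, and the remainder estimate should be $O(\Delta^2(\ln\Delta)^2)$ rather than $O(\Delta^2\ln(1/\Delta))$; neither affects the final statement.)

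The genuine gap is in the evaluation of $a_0$. Your plan to pass through the elliptic-integral representation $N_0(\epsilon)=\tfrac{1}{2\pi^2}K(\sqrt{1-\epsilon^2/16})$ and then invoke the nome, the Dedekind eta function, and Lambert series at $q=e^{-\pi}$ is only a heuristic: you have not identified which concrete combination of Lambert series appears, nor shown that it is one of the (rather special) combinations that admit a closed-form evaluation at this point. As stated, this is a hope, not a proof. The paper's argument is both different and much more elementary. It introduces the Mellin-type regularization
\[
J_1(s)=\int_0^4 \epsilon^{s-1}N_0(\epsilon)\,d\epsilon,\qquad J_2(s)=\int_0^4 \epsilon^{s-1}\,\frac{\ln(16/\epsilon)}{2\pi^2}\,d\epsilon,
\]
so that $a_0=\lim_{s\downarrow 0}(J_1(s)-J_2(s))$. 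The key point (Lemma~\ref{lemma:key}) is that $J_1(s)$ can be computed \emph{exactly}: passing back to the $(k_1,k_2)$ integral and changing variables to $k_\pm=(k_1\pm k_2)/2$ factorizes the integrand, because $2(\cos k_1+\cos k_2)=4\cos k_+\cos k_-$, yielding
\[
J_1(s)=\frac{4^s}{8\pi}\left(\frac{\Gamma(s/2)}{\Gamma((1+s)/2)}\right)^2.
\]
Comparing the Laurent expansions of $J_1$ and $J_2$ at $s=0$ gives $a_0=(\ln 2)^2/\pi^2-1/24$ directly. No modular machinery is needed; the $1/24$ comes from the $\pi^2/6$ term in the expansion of $\Gamma$, not from $\eta$. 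I would replace your speculative modular paragraph with this factorization argument.
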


The proof of Theorem \ref{mainth} will involve two steps. 
The first step is to prove the following proposition which provides a precise version of (\ref{DeltaUb1}).

\begin{proposition}\label{prop1}
The function $\Delta(U)$ defined by (\ref{DeltaAFeq}) with $t = 1$ satisfies
\begin{align*}
	\Delta(U) 
	= 32 e^{-\sqrt{\frac{4 \pi^2}{U} + b_1}}\Bigg(1 + O\bigg(\frac{e^{-\frac{4\pi}{\sqrt{U}}}}{\sqrt{U}}\bigg)\Bigg)
	\qquad \text{as $U\downarrow 0$},
\end{align*}
where $b_1$ is given by (\ref{b1def}).
\end{proposition}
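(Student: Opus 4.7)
The plan is to reduce the two-dimensional integral in (\ref{DeltaAFeq}) to a one-dimensional integral over the density of states, and to analyze its small-$\Delta$ asymptotics by isolating and computing in closed form the contribution coming from the logarithmic singularity of $N_0$ at $\epsilon=0$. First I would apply (\ref{gN0}) with $g(\epsilon)=1/(2\sqrt{\Delta^2+\epsilon^2})$ and use the evenness of $N_0$ (inherited from the symmetry $k_1\mapsto k_1+\pi$, which flips the sign of $\varepsilon$) to rewrite (\ref{DeltaAFeq}) as
\[
\frac{1}{U} = \int_0^4 \frac{N_0(\epsilon)}{\sqrt{\Delta^2+\epsilon^2}}\,d\epsilon
= \frac{1}{2\pi^2}\int_0^4 \frac{\ln(16/\epsilon)}{\sqrt{\Delta^2+\epsilon^2}}\,d\epsilon
+ \int_0^4 \frac{N_0(\epsilon)-\frac{\ln(16/\epsilon)}{2\pi^2}}{\sqrt{\Delta^2+\epsilon^2}}\,d\epsilon.
\]

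The first, singular, integral admits a closed-form evaluation. Substituting $\epsilon=\Delta\sinh u$ reduces it to $\ln(16/\Delta)\,L-\int_0^L\ln\sinh u\,du$, where $L:=\arcsinh(4/\Delta)$. Writing $\ln\sinh u=u-\ln 2+\ln(1-e^{-2u})$ and expanding the last logarithm as $-\sum_{n\ge 1}e^{-2nu}/n$, I obtain
\[
\int_0^L \ln\sinh u\,du = \frac{L^2}{2}-L\ln 2 - \frac{\pi^2}{12} + O(e^{-2L}),
\]
where the key arithmetic input is $\sum_{n\ge1}1/(2n^2)=\pi^2/12$. Combining this with $L=\ln(8/\Delta)+O(\Delta^2)$ and completing the square in $\ln(8/\Delta)$ yields
\[
\frac{1}{2\pi^2}\int_0^4 \frac{\ln(16/\epsilon)}{\sqrt{\Delta^2+\epsilon^2}}\,d\epsilon = \frac{1}{4\pi^2}\bigl(\ln(32/\Delta)\bigr)^2 - \frac{(\ln 2)^2}{\pi^2} + \frac{1}{24} + O(\Delta^2).
\]

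The second, regular, integral converges to $a_0$ as $\Delta\downarrow 0$ by (\ref{a0def}), and I would estimate the error by rewriting it as $a_0$ plus the remainder
\[
\int_0^4 \Bigl(N_0(\epsilon)-\tfrac{\ln(16/\epsilon)}{2\pi^2}\Bigr)\!\left(\tfrac{1}{\sqrt{\Delta^2+\epsilon^2}}-\tfrac{1}{\epsilon}\right)d\epsilon.
\]
The bound $N_0(\epsilon)-\ln(16/\epsilon)/(2\pi^2)=O(\epsilon^2\ln(1/\epsilon))$ from (\ref{N0asymptotics}), applied separately on $\epsilon\lesssim\Delta$ and $\epsilon\gtrsim\Delta$, produces an $O(\Delta^2(\ln(1/\Delta))^2)$ bound for this remainder. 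Adding the two contributions and invoking the definition (\ref{b1def}) of $b_1$ gives
\[
\bigl(\ln(32/\Delta)\bigr)^2 = \frac{4\pi^2}{U} + b_1 + O\bigl(\Delta^2(\ln(1/\Delta))^2\bigr).
\]

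Finally I would invert this relation. In the asymptotic regime $\Delta\sim 32 e^{-2\pi/\sqrt{U}}$ one has $\ln(1/\Delta)\sim 2\pi/\sqrt{U}$, so the right-hand side error is of relative size $O(e^{-4\pi/\sqrt{U}})$ compared with the leading term $4\pi^2/U$. Taking square roots and noting $\sqrt{4\pi^2/U+b_1}\sim 2\pi/\sqrt{U}$ yields $\ln(32/\Delta)=\sqrt{4\pi^2/U+b_1}+O(e^{-4\pi/\sqrt{U}}/\sqrt{U})$, and exponentiating converts the additive error into the multiplicative factor $1+O(e^{-4\pi/\sqrt{U}}/\sqrt{U})$ stated in the proposition. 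The main obstacle is the careful book-keeping of error terms — in particular, verifying that the $(\ln(1/\Delta))^2$ factor from the regular piece, once divided by $\sqrt{4\pi^2/U+b_1}$ under the square root, is absorbed precisely into the claimed $e^{-4\pi/\sqrt{U}}/\sqrt{U}$ bound, and that the self-referential bootstrap $\Delta\to 0$ iff $U\to 0$ closes consistently (which follows from monotonicity of the right-hand side of (\ref{DeltaAFeq}) in $\Delta$).
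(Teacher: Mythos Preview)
Your argument is correct and follows the same overall structure as the paper's proof: reduce to a one-dimensional integral over $N_0$, split off the logarithmic singularity, estimate the regular remainder by $O(\Delta^2(\ln(1/\Delta))^2)$, and invert. The difference lies in how the singular integral $I_1(\Delta)=\frac{1}{2\pi^2}\int_0^4 \ln(16/\epsilon)(\Delta^2+\epsilon^2)^{-1/2}\,d\epsilon$ is evaluated. The paper writes down an explicit antiderivative involving $\Li_2$ (and relies on Appendix~\ref{polylogapp} for its properties), obtaining a closed-form expression for $I_1(\Delta)$ valid for all $\Delta>0$ before expanding. Your hyperbolic substitution $\epsilon=\Delta\sinh u$ together with the expansion $\ln\sinh u=u-\ln 2+\ln(1-e^{-2u})$ and term-by-term integration of the geometric series is more elementary and bypasses the dilogarithm entirely; it yields the asymptotics directly, at the cost of not giving an exact finite-$\Delta$ formula. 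Both lead to the same expression $\frac{1}{4\pi^2}(\ln(32/\Delta))^2-\frac{(\ln 2)^2}{\pi^2}+\frac{1}{24}+O(\Delta^2)$, which one can check agrees with the paper's expansion of $I_1$. The remaining steps (the $I_2$ bound via splitting at $\epsilon\sim\Delta$, and the bootstrap inversion) match the paper essentially line for line.
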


The second step is to prove that the constant $b_1$ vanishes.

\begin{proposition}\label{a0prop}
The constant $b_1$ defined in (\ref{b1def}) equals $0$. 
\end{proposition}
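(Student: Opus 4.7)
The plan is to compute the Mellin transform
\[
F(s) := \int_0^4 N_0(\epsilon)\,\epsilon^{s-1}\,d\epsilon
\]
(holomorphic for $\operatorname{Re} s > 0$) in closed form, and then extract $a_0$ from its Laurent expansion at $s=0$.

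First, I would rewrite $F(s)$ as a Brillouin-zone integral using the pushforward relation (\ref{gN0}) with $g(\epsilon)=|\epsilon|^{s-1}$ and the symmetry $N_0(-\epsilon)=N_0(\epsilon)$, which gives
\[
F(s) \;=\; \frac{1}{8\pi^{2}}\int_{-\pi}^{\pi}\!\int_{-\pi}^{\pi}\bigl|2\cos k_{1}+2\cos k_{2}\bigr|^{s-1}dk_{1}\,dk_{2}.
\]
The decisive observation is that the sum-to-product identity $\cos k_1 + \cos k_2 = 2\cos\tfrac{k_1+k_2}{2}\cos\tfrac{k_1-k_2}{2}$ separates the integrand after the linear change of variables $u=(k_1+k_2)/2,\ v=(k_1-k_2)/2$ (Jacobian $2$). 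Using the $\pi$-periodicity of $|\cos|$ to unfold the rotated domain onto $(0,\pi)^{2}$ (with the appropriate multiplicity), the double integral decouples as a product of two 1D beta integrals. Invoking $\int_{0}^{\pi}|\cos u|^{s-1}du=\sqrt{\pi}\,\Gamma(s/2)/\Gamma((s+1)/2)$ then yields the closed form
\[
F(s) \;=\; \frac{4^{s}}{8\pi}\left[\frac{\Gamma(s/2)}{\Gamma((s+1)/2)}\right]^{2}.
\]

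Next, I would Laurent-expand $F(s)$ at $s=0$ in two independent ways. From the closed form, using the standard values $\psi(1/2)=-\gamma-2\ln 2$ and $\psi'(1/2)=\pi^{2}/2$, a short computation gives
\[
F(s) = \frac{1}{2\pi^{2}s^{2}} + \frac{2\ln 2}{\pi^{2}\,s} + \frac{4(\ln 2)^{2}}{\pi^{2}} - \frac{1}{24} + O(s).
\]
On the other hand, splitting $N_0(\epsilon) = \tfrac{\ln(16/\epsilon)}{2\pi^{2}} + \bigl(N_0(\epsilon)-\tfrac{\ln(16/\epsilon)}{2\pi^{2}}\bigr)$, evaluating the elementary integral $\int_0^4 \ln(16/\epsilon)\,\epsilon^{s-1}d\epsilon = \tfrac{4^s \ln 4}{s}+\tfrac{4^s}{s^2}$, and using (\ref{N0asymptotics}) to conclude that the Mellin transform of the subtracted density on $(0,4)$ is analytic at $s=0$ with value $a_0$, one obtains
\[
F(s) = \frac{1}{2\pi^{2}s^{2}} + \frac{2\ln 2}{\pi^{2}\,s} + \frac{3(\ln 2)^{2}}{\pi^{2}} + a_0 + O(s).
\]
Equating the constant terms in the two expansions forces $a_0 = (\ln 2)^{2}/\pi^{2}-1/24$, and substitution into (\ref{b1def}) then gives $b_1=0$.

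The hard part will be Step~1, namely recognising that the linear change of coordinates $(k_1,k_2)\mapsto(u,v)$ reduces the 2D tight-binding Mellin integral to a product of 1D beta integrals. Once the closed form for $F(s)$ is in hand, the rest of the argument is routine expansion of $\Gamma$ at $s=0$ and matching of Laurent coefficients.
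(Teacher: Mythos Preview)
Your proposal is correct and follows essentially the same route as the paper: both compute the Mellin transform $\int_0^4 N_0(\epsilon)\,\epsilon^{s-1}d\epsilon$ by passing to the Brillouin zone, applying the sum-to-product identity $\cos k_1+\cos k_2=2\cos\tfrac{k_1+k_2}{2}\cos\tfrac{k_1-k_2}{2}$ with the rotation $(k_1,k_2)\mapsto\bigl(\tfrac{k_1+k_2}{2},\tfrac{k_1-k_2}{2}\bigr)$ to factor the integral into two Beta integrals, obtaining the closed form $\tfrac{4^s}{8\pi}\bigl[\Gamma(s/2)/\Gamma((s+1)/2)\bigr]^2$, and then reading off $a_0$ from the constant term of the Laurent expansion at $s=0$. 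The only cosmetic difference is that you use $g(\epsilon)=|\epsilon|^{s-1}$ over the full torus and invoke $\pi$-periodicity of $|\cos|$, whereas the paper uses $g(\epsilon)=\epsilon^{s-1}\theta(\epsilon)$ and restricts directly to the region $\{|k_1\pm k_2|\le\pi\}$ where the integrand is nonnegative.
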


Theorem \ref{mainth} is an immediate consequence of Propositions \ref{prop1} and \ref{a0prop}.

The proofs of Propositions \ref{prop1} and \ref{a0prop} are presented in Sections \ref{proofsec1} and \ref{proofsec2}, respectively.

While our results prove that $b_1=0$, the deeper reasons behind this fine-tuning of the parameter $b_1$ remain mysterious to us. We find it remarkable that $b_1$ is exactly zero, and we believe that there is an interesting explanation why this is true which remains to be found.

\subsection{Proof strategy}\label{strategy} 
While the proof of Proposition \ref{prop1} is relatively straightforward, it was an interesting challenge for us to prove Proposition \ref{a0prop}. Our evaluation is reminiscent of a quantum field theory calculation where a finite answer is extracted from an infinite integral by subtracting off the divergence. Indeed, the integral in (\ref{a0def}) converges because the divergent part $\ln(\frac{16}{\epsilon})/(2 \pi^2)$ has been subtracted from $N_0(\epsilon)$. To control the divergence, we use a regularization and, while many different regularizations exist which all would give the same final answer, the challenge was to find a regularization making the computations as simple as possible. 

In a first version of this paper, available on the arXiv,\footnote{\href{https://arxiv.org/abs/2501.18141v1}{https://arxiv.org/abs/2501.18141v1}} we proved Proposition~\ref{a0prop} by using a regularization that led us to challenging integrals we were able to analyze using asymptotic methods and complex analysis. Subsequently, we found another regularization leading to an exactly solvable integral that provides a much simpler proof, allowing us to reduce the size of this paper by a factor $2$. We believe that our first proof is still interesting since it can be generalized to situations where an exact computation of the integral is not available.

\subsection{Further applications in the 2D Hubbard model} 
The main result of this paper is the exact value of  the integral defined in \eqref{a0def} appearing in an asymptotic  formula for the mean-field gap of the 2D Hubbard model at half-filling; see \eqref{a0conjecture} . In this section, we point out other implications of this result which are of interest in physics. 

The N\'eel temperature, $T_N$,  of an antiferromagnetic system is defined as the transition temperature from the  antiferromagnetic to the paramagnetic (non-ordered) state. We recently obtained the following result for the mean-field N\'eel temperature of the 2D Hubbard model at half-filling (setting $t=1$), 
\begin{align}\label{TNexpansion2D}
	T_N(U)
	= \frac{32}{\pi e^{-\gamma}}
	e^{-\sqrt{ \frac{4 \pi^2}{U}  + a_1}}\Big(1 + O\Big(e^{-\frac{4\pi}{\sqrt{U}}}\Big)\Big) \qquad \text{as $U \downarrow 0$}, 
\end{align}
where $\gamma \approx 0.5772$ is Euler's gamma constant and the constant $a_1 \approx 0.3260$ is defined by
\begin{align} 
	a_1 :=  -4 \pi^2 a_0
	- \int_{0}^\infty \frac{(\ln x)^2}{\cosh^{2}{x}} dx + (2\ln2)^2 +(\gamma+2\ln2-\ln\pi)^2 \label{a1def} 
\end{align} 
with $a_0$ in \eqref{a0def}; see \cite[Theorem~2.4]{LL2024}. This result is similar to the one in Proposition~\ref{prop1}, and our definition in \eqref{b1def} implies 
\begin{equation}\label{a1}  
a_1 -b_1 = (\gamma+2\ln2-\ln\pi)^2 +  \frac{\pi^2}{6}- \int_{0}^\infty \frac{(\ln x)^2}{\cosh^{2}{x}} dx .
\end{equation} 
It is worth noting that this difference $a_1-b_1$ appears in an asymptotic formula for the gap ratio, $\Delta/T_N$; see \cite{LL2024} for a discussion of the significance of the quantity $\Delta/T_N$. Indeed, from Proposition~\ref{prop1} and \eqref{TNexpansion2D}, one can get the following mean-field gap ratio of the 2D Hubbard model by a simple computation,  
\begin{equation} 
\frac{\Delta(U)}{T_N(U)} = \pi e^{-\gamma} + \frac{e^{-\gamma}}{4}(a_1-b_1)\sqrt{U} + O(U) \qquad \text{as $U \downarrow 0$} \label{gapratiolong};  
\end{equation} 
since $b_1=0$ by Proposition~\ref{a0prop}, $a_1$ in \eqref{TNexpansion2D} is identical to $a_1-b_1$ in \eqref{a1}.

This paper is the second in a series, starting with \cite{LL2024}, where we develop analytic tools to study Hartree--Fock theory for Hubbard-like models. As a first step, we concentrate on half-filling and the standard Hubbard model since, in this case, it is known that the minimizer of the Hartree--Fock functional is translation invariant (by translations by two sites) \cite{BLS1994,BP1996}, which simplifies the analysis. 
In on-going work, we address other parts of the phase diagram where it is known that this minimizer can break translation invariance in complicated ways \cite{LL2025,CLL2025}; see \cite{LW2007} and references therein for previous work on this topic. 
	
It is interesting to note that the integral equation in (\ref{DeltaAFeq}) is the same as the equation determining the superconducting gap in the attractive 2D Hubbard model, $U<0$, obtained in the theory of Bardeen, Cooper and Schrieffer (BCS); see e.g.\ \cite{DHM2023, FHNS2007, HHSS2008,HS2008,LT2023,HL2025, HLR2024} for mathematical work on BCS theory. Thus, our results also have applications in BCS theory. 

\subsection{Notation}\label{notationsubsec}
The following notation will be used. 
We write $s \downarrow 0$ to denote the limit as $s$ approaches $0$ along the positive real axis. 
The principal branch is used for all complex powers, square roots, logarithms, and dilogarithms. In the case of the dilogarithm $\Li_2(z)$ (whose definition is reviewed in Appendix \ref{polylogapp}), this means that $\Li_2(z)$ has a branch cut along $[1,+\infty)$. 
We use $C > 0$ to denote a generic constant whose value may change within a computation. 

\section{Asymptotics of $\Delta$: proof of Proposition \ref{prop1}}\label{proofsec1}
Setting $t = 1$ and rewriting (\ref{DeltaAFeq}) in terms of the density of states $N_0(\epsilon)$, we see that $\Delta(U)$ is defined for $U > 0$ as the unique positive solution of the equation
\begin{align}\label{premAFeq}
\frac{1}{U} = \int_\R N_{0}(\epsilon)  
\frac{1}{2\sqrt{\Delta^2 + \epsilon^2}}  d\epsilon.
\end{align} 
Since $N_0(\epsilon)$ is an even function of $\epsilon$ supported on the interval $[-4,4]$, equation (\ref{premAFeq}) simplifies to
\begin{align}\label{mAFeq}
\frac{1}{U} = \int_0^4 N_{0}(\epsilon)  
\frac{1}{\sqrt{\Delta^2 + \epsilon^2}}  d\epsilon.
\end{align} 
To extract the small $U$ behavior, we write (\ref{mAFeq}) as
\begin{align}\label{a0I1I2}
\frac{1}{U} = &\; a_0 + I_1(\Delta) + I_2(\Delta) ,
\end{align}
where $a_0 \approx 0.007013$ is the constant in (\ref{a0def}) and
\begin{align*}
& I_1(\Delta) := \int_0^4 \frac{\ln(\frac{16}{\epsilon})}{2 \pi^2} 
\frac{1}{\sqrt{\Delta^2 + \epsilon^2}}  d\epsilon,
	\\
& I_2(\Delta) := \int_0^4 \bigg(N_{0}(\epsilon)  - 
\frac{\ln(\frac{16}{\epsilon})}{2 \pi^2}\bigg)
\bigg(\frac{1}{\sqrt{\Delta^2 + \epsilon^2}}
- \frac{1}{\epsilon}\bigg) d\epsilon.
\end{align*}
The integral $I_1(\Delta)$ can be computed exactly by noting that
\begin{align*}
& \frac{1}{16 \pi^2} \frac{d}{d\epsilon} \bigg\{
-2 \bigg(\ln \frac{1 + \sqrt{\frac{\Delta^2}{\epsilon^2}+1}}{2}\bigg)^2
-\bigg(\ln \frac{\Delta^2}{\epsilon^2}\bigg)^2
+4 \ln \bigg(\frac{1 + \sqrt{\frac{\Delta^2}{\epsilon^2}+1}}{2}\bigg) \ln \bigg(\frac{\Delta
^2}{\epsilon^2}\bigg)
	\\
& +4 \ln \bigg(\frac{256}{\Delta^2}\bigg) \arctanh\bigg(\frac{1}{\sqrt{\frac{\Delta^2}{\epsilon^2}+1}}\bigg)
+4  \Li_2\bigg(\frac{1 - \sqrt{\frac{\Delta^2}{\epsilon^2}+1}}{2}\bigg)\bigg\}
= \frac{\ln(\frac{16}{\epsilon})}{2 \pi^2} 
\frac{1}{\sqrt{\Delta^2 + \epsilon^2}}
\end{align*}
for $\epsilon > 0$ and $\Delta > 0$, where we have used (\ref{polylogderivatives}). Employing (\ref{Li2asymptotics}), this yields
\begin{align*}
I_1(\Delta) = &\; \frac{1}{24 \pi^2}\bigg\{12 \ln \left(\frac{16}{\Delta }\right) \arctanh\left(\frac{4}{\sqrt{\Delta^2+16}}\right)
	\\
& -3 \ln\left(\sqrt{\Delta^2+16}+4\right) \ln\left(\frac{4 \left(\sqrt{\Delta^2+16}+4\right)}{\Delta^4}\right)-6 \ln(\Delta ) \ln(4 \Delta )
	\\
& +6 \Li_2\left(\frac{1}{2}-\frac{\sqrt{\Delta^2+16}}{8}\right)+\pi^2+27 (\ln 2)^2\bigg\} \qquad \text{for $\Delta > 0$},
\end{align*}
and hence (since $\Li_2(z) = O(z)$ as $z \to 0$)
\begin{align}\label{I1estimate}
I_1(\Delta)
= \frac{6 (\ln \Delta )^{2} -60 \ln(2) \ln(\Delta )+\pi^2+126 (\ln 2)^{2}}{24 \pi^2} + O(\Delta^2)  \qquad \text{as $\Delta \downarrow 0$}.
\end{align}

The second integral $I_2(\Delta)$ can be estimated by using the bound $|N_0(\epsilon)-\frac{1}{2\pi^2}\ln(\frac{16}{\epsilon})|\leq C\epsilon^2\ln(\frac{16}{\epsilon})$ ($0 < \epsilon \leq 4$) implied by (\ref{N0asymptotics}), the change of variables $x=\epsilon/\Delta$, and the inequality 
$x^2(x^{-1}-(1+x^{2})^{-1/2})\leq \min(x,x^{-1})$ for $x > 0$. This gives, for $0<\Delta\leq 4$,
\begin{align}\nonumber
	|I_2(\Delta)|\leq& \int_0^{4} C\epsilon^2\ln(\tfrac{16}{\epsilon})\left(\frac1{\epsilon} -\frac1{\sqrt{\Delta^2+\epsilon^2}}\right)d\epsilon 
		\\ \nonumber
	= &\; C\Delta^2\int_{0}^{4/\Delta}
	x^2\left(\frac1x-\frac1{\sqrt{1+x^2}} \right) \ln(\tfrac{16}{x\Delta})dx
	\\ \label{I2estimate}
	 \leq 
	 &\; C\Delta^2\int_0^1  x\ln(\tfrac{16}{x\Delta})dx
	+ C\Delta^2\int_1^{\frac4\Delta}x^{-1}\ln(\tfrac{16}{x\Delta})dx
	\leq C \Delta^2 |\ln\Delta|^2.
\end{align}

Substitution of (\ref{I1estimate}) and (\ref{I2estimate}) into (\ref{a0I1I2}) yields
\begin{align*}
		\frac{1}{U} = a_0 +
		\frac{6 (\ln \Delta )^{2} -60 \ln(2) \ln(\Delta )+\pi^2+126 (\ln 2 )^{2}}{24 \pi^2}
		+ \frac{E(\Delta)}{4\pi^2},
\end{align*}
where $E(\Delta) = O(\Delta^2 |\ln\Delta|^2)$ as $\Delta \downarrow 0$ 
which, by using the definition of $b_1$ in (\ref{b1def}), can be written as 
\begin{align}\label{1overUDelta}
	\frac{1}{U} =  \frac{(\ln(\frac{\Delta}{32}))^2-b_1+E(\Delta)}{4\pi^2}. 
\end{align}
It is clear from (\ref{mAFeq}) that $\Delta(U) \downarrow 0$ as $U \downarrow 0$. 
Using this fact to pick the appropriate root when solving (\ref{1overUDelta}) for $\Delta$, we obtain
\begin{align}\label{Delta32eE}
		\Delta = 32 e^{-\sqrt{\frac{4 \pi^2}{U} + b_1 - E(\Delta)}}.
\end{align}

The relation (\ref{Delta32eE}) implies that $\Delta = O(e^{-\frac{2\pi}{\sqrt{U}}})$ and hence that $E(\Delta) = O(U^{-1} e^{-\frac{4\pi}{\sqrt{U}}})$ as $U \downarrow 0$. 
Substituting this back into (\ref{Delta32eE}), we find
$$\Delta(U) = 32 \exp\Bigg(-\sqrt{\frac{4 \pi^2}{U} + b_1 + O\bigg(\frac{e^{-\frac{4\pi}{\sqrt{U}}}}{U}\bigg)}\Bigg)
\qquad \text{as $U \downarrow 0$},$$
from which Proposition \ref{prop1} follows.

\section{Computing $a_0$: proof of Proposition \ref{a0prop}}\label{proofsec2}
In this section, we show that the constant $a_0$ defined in (\ref{a0def}) is given by 
\begin{equation}\label{a0conjecture}  
	a_0 = \frac{(\ln 2)^2}{\pi^2}-\frac1{24}. 
\end{equation} 
Recalling the definition (\ref{b1def}) of $b_1$, we see that Proposition \ref{a0prop} is a direct consequence of (\ref{a0conjecture}).

By the dominated convergence theorem and \eqref{a0def}, we have 
\begin{align*}
a_0 & = \lim_{s \downarrow 0} \int_{0}^{4} \bigg(N_0(\epsilon) - \frac{\ln(\frac{16}{\epsilon})}{2 \pi^2}\bigg)
\frac{1}{\epsilon^{1-s}}  d\epsilon = \lim_{s \downarrow 0} [J_1(s) - J_2(s)],
\end{align*}
where the functions $J_1$  and $J_2$  are defined by
\begin{align}\label{J1J2def} 
J_1(s) := \int_{0}^{4}  \frac{N_0(\epsilon)}{\epsilon^{1-s}}  d\epsilon,
\qquad J_2(s) := \int_{0}^{4} \frac{ \ln(\frac{16}{\epsilon})}{2\pi^2} \frac{1}{\epsilon^{1-s}}  d\epsilon\quad (s>0). 
\end{align}
The integral defining $J_2(s)$ is elementary, 
\begin{align} 
J_2(s) =\frac{4^{s}(1+2s\ln2)}{2\pi^2 s^2}   \quad (s>0), 
\end{align} 
and the first few terms in its Laurent series around $s = 0$ are
\begin{align*} 
J_2(s) = \frac{1}{2\pi^2s^2} +  \frac{2\ln 2}{\pi^2 s} + \frac{3(\ln 2)^2}{\pi^2} + \cdots.
\end{align*} 

The other integral can also be computed exactly, for arbitrary $s>0$; since this exact integral has other applications in condensed matter physics \cite{CLL2025}, we present it as a lemma. 

\begin{lemma} \label{lemma:key} The function $N_0(\epsilon)$ defined in \eqref{N0} satisfies
\begin{align}\label{J1} 
\int_{0}^{4}  \epsilon^{s-1} N_0(\epsilon) d\epsilon  = \frac{4^s}{8\pi}\left( \frac{\Gamma(\tfrac{s}{2})}{\Gamma(\tfrac12+\tfrac{s}{2})}\right)^2 \quad (s>0)
\end{align} 
with $\Gamma(z)$ the Euler Gamma function. 
\end{lemma}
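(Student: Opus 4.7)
The plan is to reduce \eqref{J1} to a product of two elementary Euler Beta integrals. As a first step, using the evenness of $N_0$ together with the pushforward characterization \eqref{gN0} to turn the integral over $\epsilon$ into a double integral over $(k_1,k_2)$, I would rewrite
\begin{align*}
\int_0^4 \epsilon^{s-1}N_0(\epsilon)\,d\epsilon = \frac{2^{s-1}}{2(2\pi)^2}\int_{-\pi}^\pi\!\int_{-\pi}^\pi|\cos k_1+\cos k_2|^{s-1}\,dk_1\,dk_2.
\end{align*}
The sum-to-product identity $\cos k_1+\cos k_2=2\cos\bigl(\tfrac{k_1+k_2}{2}\bigr)\cos\bigl(\tfrac{k_1-k_2}{2}\bigr)$ suggests passing to the new variables $\alpha=k_1+k_2$, $\beta=k_1-k_2$, in which the integrand becomes the product $2^{s-1}|\cos(\alpha/2)|^{s-1}|\cos(\beta/2)|^{s-1}$.

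The key technical step is the change of variables. The direct image of $[-\pi,\pi]^2$ under $(k_1,k_2)\mapsto(\alpha,\beta)$ is the rotated square $\{|\alpha|+|\beta|\le 2\pi\}$ of area $8\pi^2$, which is not what one wants. Instead, I would establish the following periodicity identity: for any measurable function $f$ that is $2\pi$-periodic in each argument,
\begin{align*}
\int_{-\pi}^\pi\!\int_{-\pi}^\pi f(k_1+k_2,k_1-k_2)\,dk_1\,dk_2 = \int_{-\pi}^\pi\!\int_{-\pi}^\pi f(\alpha,\beta)\,d\alpha\,d\beta.
\end{align*}
The proof is a short Fubini-plus-periodicity argument: substitute $\alpha=k_1+k_2$ in the $k_1$-integral (shifting the interval back to $[-\pi,\pi]$ by $2\pi$-periodicity in the first slot), then substitute $\beta=\alpha-2k_2$ in the $k_2$-integral, where the resulting $4\pi$-length interval combines with the Jacobian $\tfrac12$ and $2\pi$-periodicity in the second slot to produce exactly one copy of $\int_{-\pi}^\pi d\beta$.

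Once this identity is applied to $f(\alpha,\beta)=2^{s-1}|\cos(\alpha/2)|^{s-1}|\cos(\beta/2)|^{s-1}$, the double integral separates into the square of a one-dimensional integral that is evaluated by substituting $\theta=\alpha/2$ and invoking the classical Beta integral:
\begin{align*}
\int_{-\pi}^\pi|\cos(\alpha/2)|^{s-1}\,d\alpha = 4\int_0^{\pi/2}(\cos\theta)^{s-1}\,d\theta = \frac{2\sqrt{\pi}\,\Gamma(s/2)}{\Gamma(\tfrac{1}{2}+\tfrac{s}{2})}.
\end{align*}
Collecting the powers of $2$ then produces the prefactor $4^s/(8\pi)$ in the statement.

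The main point requiring care is the change-of-variables identity: since the integrand $|\cos k_1+\cos k_2|^{s-1}$ has an integrable singularity along $\cos k_1+\cos k_2=0$ when $s<1$, one must check that the double integral is absolutely convergent in order to justify Fubini. This holds for all $s>0$ (the only range of interest), so the periodicity argument goes through without further complications and the rest of the computation is routine.
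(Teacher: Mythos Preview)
Your proposal is correct and follows essentially the same approach as the paper: convert to a double integral over $(k_1,k_2)$, apply the sum-to-product identity, decouple via a change to $(k_1+k_2,k_1-k_2)$-type variables, and evaluate via the Beta integral. The only technical difference is how the change of variables is handled: the paper uses the Heaviside function $g(\epsilon)=\epsilon^{s-1}\theta(\epsilon)$, restricts the $(k_1,k_2)$-integral to the diamond $\{|k_1+k_2|\le\pi,\ |k_1-k_2|\le\pi\}$ where $\cos k_1+\cos k_2\ge 0$, and then the substitution $k_\pm=(k_1\pm k_2)/2$ maps this diamond directly to the square $[-\pi/2,\pi/2]^2$, so no periodicity argument is needed; your version uses $|\epsilon|^{s-1}$ via evenness and replaces the geometric step by the periodicity-plus-Fubini identity.
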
 

(The proof can be found at the end of this section.) 

Using Lemma \ref{lemma:key} and the Laurent series $\Gamma(\tfrac{s}{2})/\Gamma(\tfrac12+\tfrac{s}{2})=\frac2{\sqrt{\pi}s}+\frac{2\ln 2}{\sqrt{\pi}}+O(s)$,  we find 
$$
J_1(s) =  \frac{4^s}{8\pi}\left( \frac{\Gamma(\tfrac{s}{2})}{\Gamma(\tfrac12+\tfrac{s}{2})}\right)^2  = \frac{1}{2\pi^2s^2} +  \frac{2\ln 2}{\pi^2 s}  + \frac{4(\ln 2)^2}{\pi^2} -\frac1{24}+O(s)\quad \text{as $s \to 0$}.
$$
Thus, the divergent terms in $J_1(s)$ and $J_2(s)$ are the same, and the limit $s\downarrow 0$ of 
$$J_1(s)-J_2(s) = \frac{(\ln 2)^2}{\pi^2} -\frac1{24}+O(s) $$ 
yields \eqref{a0conjecture}, which in view of (\ref{b1def}) implies Proposition \ref{a0prop}.

\begin{proof}[Proof of Lemma~\ref{lemma:key}]
Let $s > 0$. Since $N_0(\epsilon)$ is even and has support in $[-4,4]$, it follows from (\ref{gN0}) that
\begin{align*}
\int_{-4}^4 g(\epsilon) N_0(\epsilon)d\epsilon = \int_{-\pi}^\pi \int_{-\pi}^\pi g(2 (\cos{k_1} + \cos{k_2})) \frac{dk_1dk_2}{(2\pi)^2}
\end{align*}
for any measurable function $g:\R \to \R$ such that $gN_0 \in L^1(\R)$.
Using this for $g(\epsilon)= \epsilon^{s-1} \theta(\epsilon)$, where $\theta$ denotes the Heaviside function, 
we obtain
$$
J_1(s)=\int_{-4}^4 \epsilon^{s-1} \theta(\epsilon) N_0(\epsilon)d\epsilon =  \int_{[-\pi,\pi]^2} \big(2(\cos{k_1} +\cos{k_2})\big)^{s-1} \theta\big(2(\cos{k_1} +\cos{k_2})\big)\frac{dk_1dk_2}{(2\pi)^2}
$$
for $s>0$. 
Since the function $2(\cos{k_1} +\cos{k_2})$ of $(k_1,k_2)\in[-\pi,\pi]^2$ is $\geq 0$ in the region $|k_1+k_2|\leq \pi$, $|k_1-k_2|\leq \pi$ and $<0$ otherwise, 
$$
J_1(s) =  \int_{\substack{|k_1+k_1|\leq \pi\\|k_1-k_1|\leq \pi}} \big(2(\cos{k_1} +\cos{k_2})\big)^{s-1} \frac{dk_1dk_2}{(2\pi)^2}. 
$$
Moreover, since  
$$
2(\cos{k_1} +\cos{k_2}) = 4\cos\left(\frac{k_1+k_2}{2}\right)\cos\left(\frac{k_1-k_2}{2}\right), 
$$
we can compute the latter integral by a change of variables $(k_1,k_2)\to (k_-,k_+)$ where $k_\pm = (k_1\pm k_2)/2$, which gives (the factor $2$ is the Jacobian determinant for this variable change),  
\begin{multline*} 
J_1(s) = 2\int_{[-\pi/2,\pi/2]^2}\left(4\cos\left(k_+\right)\cos\left(k_-\right)\right)^{s-1} \frac{dk_+dk_-}{(2\pi)^2}\\
=  \frac{4^{s}}{8\pi^2}\left( \int_{-\pi/2}^{\pi/2} \cos(k)^{s-1}dk\right)^2 =  \frac{4^{s}}{8\pi^2}\left( 2\int_{0}^{\pi/2} \cos(k)^{s-1} dk\right)^2 = \frac{4^{s}}{8\pi^2}\left( \frac{\Gamma(\tfrac12) \Gamma(\tfrac{s}{2})}{\Gamma(\tfrac{1}{2}+\tfrac{s}{2})}\right)^2 , 
\end{multline*} 
using a well-known integral representation of the Beta function B$(\tfrac12,\tfrac{s}{2})$ and its relation to the Gamma function. By inserting $\Gamma(\tfrac12)=\sqrt{\pi}$, we obtain  the result. 
\end{proof}

\bigskip\noindent

\noindent {\bf Acknowledgements.} 
{\it We acknowledge support from the Swedish Research Council (VR), Grants No. 2023-04726 (EL) and No. 2021-03877 (JL).}

\medskip\noindent
{\bf Conflicts of interest.} {\it  On behalf of all authors, the corresponding author states that there is no conflict of interest.}

\medskip\noindent
{\bf Data availability.} {\it Data sharing not applicable to this article as no datasets were generated or analyzed during the
current study.}

\appendix
\section{Dilogarithm}\label{polylogapp}
The dilogarithm $\Li_2(z)$ is defined by 
\begin{align}\label{polylogdef}
\Li_2(z) = \sum_{k=1}^\infty \frac{z^k}{k^2} \qquad \text{for $z\in\C$ such that $|z| < 1$}
\end{align}
and extended to $z \in \C \setminus [1, +\infty)$ by analytic continuation.
It obeys (see e.g. \cite{Z2007})
\begin{align}\label{polylogderivatives}
\frac{d}{dz}\Li_2(z) = -\frac{\ln(1-z)}{z} 
\end{align}
and
\begin{align}\label{dilogidentity}
\Li_2\Big(-\frac{1}{z}\Big) = -\Li_2(-z) - \frac{1}{2}(\ln z)^2 - \frac{\pi^2}{6} \qquad \text{for $z \in \C \setminus (-\infty,0]$}.
\end{align}
The identity (\ref{dilogidentity}) together with (\ref{polylogdef}) implies the asymptotic formula
\begin{align}\label{Li2asymptotics}
\Li_2(-z) = -\frac{1}{2}(\ln z)^2 - \frac{\pi^2}{6} + O(z^{-1}) \qquad \text{as $z \to \infty$}
\end{align}
uniformly for $\arg(z) \in (-\pi, \pi)$. 

\bibliographystyle{amsplain}

\end{document}